\newtheorem{theorem}{Theorem}[section]
\newtheorem{lemma}[theorem]{Lemma}
\theoremstyle{definition}
\newtheorem{definition}[theorem]{Definition}
\theoremstyle{remark}
\newtheorem{remark}[theorem]{Remark}
\numberwithin{equation}{section}
\def\a{{\alpha}}
\def\d{{\delta}}
\def\ep{{\eta}}
\def\z{{\zeta}}
\def\k{{\kappa}}
\def\l{{\lambda}}
\def\s{{\sigma}}
\def\f{{\phi}}
\def\na{{\nabla}}
\def\l{{\lambda}}
\def\m{{\mu}}
\def\p{{\psi}}
\def\G{{\Gamma}}
\def\O{{\Omega}}
\def\D{{\Delta}}
\def\ee{{{\mathcal E}}}
\def\mm{{{\mathcal M}}}
\def\ccc{{{\bf c}}}
\def\ddd{{{\bf d}}}
\def\bbb{{{\bf b}}}
\def\aaa{{{\bf a}}}
\def\xxx{{{\bf x}}}
\def\yyy{{{\bf y}}}
\def\nnn{{{\bf n}}}
\def\JJJ{{{\bf J}}}
\def\RRR{{{\bf R}^3}}
\def\RRRR{{{\bf R}^N}}
\def\9{{\ \hbox{in}\ \O}}
\def\1{{\ \hbox{on}\ \G}}
\def\2{{\ \hbox{on}\ \G_2}}
\def\3{{\ \hbox{on}\ \G_3}}
\def\0{{\ \hbox{on}\ \G}}
\def\pa{{\partial}}
\def\pp{{\parallel}}
\begin{document}

\title[The Maxwell-Volterra reciprocity law in electrodynamics]{The Maxwell-Volterra reciprocity law in electrodynamics}
\author{Giovanni Cimatti}
\address{Department of Mathematics, Largo Bruno
  Pontecorvo 5, 56127 Pisa Italy}
\email{cimatti@dm.unipi.it}


\subjclass[2010]{28EE99, 35J55}



\keywords{Volterra reciprocity law, Dirac's measures}

\begin{abstract}
Use is made of the theory of elliptic equations with measures data to prove the Maxwell-Volterra reciprocity law. A simple one-dimensional example is also given.
\end{abstract}

\maketitle

\section{Introduction}
In 1882 the Italian mathematician Vito Volterra, who was at that time a 23 years old student, published \cite{V} a paper entitled ``Sopra una legge di reciprocit\'a nella distribuzione delle temperature e delle correnti galvaniche costanti in un corpo qualunque''\footnote{In the same year Volterra published other 3 papers \cite{V1}, \cite{V2}, \cite{V3}.}.
The result was summarized by the author as follows ``se in un conduttore a due o tre dimensioni in cui la conducibilit\'a varia con continuit\'a da punto a punto si fa passare una corrente di intensit\'a $I$ da un punto $a$ a un punto $b$ e in due punti $c$ e $d$ si ha una differenza di potenziale, otterremo la stessa differenza fra i potenziali dei punti $a$ e $b$ quando si faccia entrare da $c$ e uscire da $d$ la stessa corrente di intensit\'a $I$''. 

Volterra was inspired in this research by the following passage of the book of J. C. Maxwell ``A Treatise on Electricity and Magnetism'' \cite{M} (Vol.1 page 405 second edition) : ``Let $A$, $B$,  $C$, $D$ be any four points of a linear system of conductor and let the effect of a current $Q$, made to enter at $A$ and leave it at $B$, be to make the potential at $C$ exceed that at $D$ by $P$. Then, if an equal current $Q$ be made to enter the system at $C$ and leave it at $D$, the potential at $A$ will exceed that at $B$ by the same quantity P''. The law of Maxwell refers to linear conductors and the intention of Volterra was to generalize Maxwell's result to plane or three-dimensional conductors.

 Volterra's proof, although clever, is not fully rigorous. This is understandable, since to model correctly the injection of a current in a single point of a conductor one needs the theory of distribution which was developed much later. Goal of this paper is to show that  the reciprocity result of Volterra is a simple consequence of the theory of elliptic equations with measures in the right hand side developed mainly by Guido Stampacchia in the 60' of the last century. We refer in particular to the papers \cite{LWS}, \cite{S1} and, for review papers, to \cite{B} and  \cite{O}. Those results are summarized in Sections 3 and 4. Moreover, we show that, if the conductor is not homogeneous or not isotropic and therefore the electrical conductivity is represented by a symmetric tensor\footnote{The symmetry of the conductivity tensor is a consequence of the Onsager's principle, see \cite{L}, page 148.} $a_{ij}(\xxx)$ the Volterra's reciprocity law is true only if $a_{ij}(\xxx)$ satisfies certain compatibility conditions. This situation occurs quite often. As an example, see \cite{H}, let us consider a conductor of heat and electricity $\O$ with boundary $\pa\O$. Denote by $\f$ the electric potential, by $u$ the temperature and by $\s(u)$ and $k(u)$ the electric conductivity and the thermal conductivity, both depending on the temperature. Neglecting Joule heating we arrive, under steady conditions, to the following boundary value problem

\begin{equation}
\label{a}
\na\cdot(\s(u)\na\f)=0\ \ \hbox{in}\ \O,\quad \f=\f_b\ \ \hbox{on}\ \pa\O
\end{equation}

\begin{equation}
\label{b}
\na\cdot(\k(u)\na u)=0\ \ \hbox{in}\ \O,\quad\ u=u_b\quad\hbox{on}\ \pa\O,
\end{equation}
where $u_b$ denotes the prescribed temperature on the boundary and $\f_b$ the given potential on $\pa\O$. From (\ref{b}) using the Kirchhoff's transformation the temperature $u(x)$ can be obtained. Substituting $u=u(\xxx)$ in (\ref{a}) we find a virtual ``electrical conductivity'' $a(\xxx)=\s(u(\xxx))$ depending on $\xxx=(x_1,x_2,x_3)\in\O$.

 We point out that Volterra supposed the conductivity to be continuous whereas in the proof we present here it is enough to assume the conductivity tensor in $L^\infty(\O)$ and to satisfy an uniform ellipticity condition.

\section{Derivation of the basic equations. A one-dimensional case.}
Let $\O$ be a bounded, open and connected subset of $\RRR$ with a regular boundary $\pa\O$. $\O$ represents a solid conductor of electricity. Let $a_{ij}(\xxx)$ be a symmetric tensor giving the electric conductivity. Let $\f(\xxx)$ be the electric potential in $\O$. We assume the boundary of $\O$ to be grounded. This corresponds to the boundary condition $\f=0\ \  \hbox{on}\ \pa\O$. Let $\aaa$ and $\bbb$ be two distinct points of $\O$. We assume that a current $I$ is injected in $\aaa$ and the same current is extracted from $\bbb$. To model this situation we recall that the current density $\JJJ$ in $\O$ is given by the local form of Ohm's law i.e.\footnote{Here and hereafter use is made of the summation convention.}

\begin{equation*}
J_i=a_{ij}\f_{x_j}.
\end{equation*}
Let $B_\aaa$ and $B_\bbb$ be two open spheres with centers $\aaa$ and $\bbb$ and equal radius $L$ such that $\bar B_\aaa\cap\bar B_\bbb=\emptyset$, $\bar B_\aaa\cap\pa\O=\emptyset$ and $\bar B_\bbb\cap\pa\O=\emptyset$. If $\nnn^{(\aaa)}$ denotes the exterior pointing unit normal vector to $\partial B_\aaa$, the total current crossing $\partial B_\aaa$ is given by

\begin{equation*}
\label{2_8}
\int_{\partial B_\aaa} a_{ij}(\xxx)\f_{x_i}n^{(\aaa)}_j d\sigma.
\end{equation*}
Let $\z^{(\aaa)}(\xxx)$ be the solution, vanishing on $\pa\O$, of the equation

\begin{equation}
\label{3_8}
\bigl(a_{ij}(\xxx)\z^{(\aaa)}_{x_i}\bigl)_{x_j}=\d_{\aaa}(\xxx),
\end{equation}
where $\d_{\aaa}$ denotes the Dirac's measure pointed in $\aaa$. Define the number

\begin{equation*}
\a^{(\aaa)}=\int_{\partial B_\aaa} a_{ij}(\xxx)\z_{x_i}^{(\aaa)}n^{(\aaa)}_j d\sigma.
\end{equation*}
Similarly, let $\z^{(\bbb)}(\xxx)$ be the solution of the equation

\begin{equation}
\label{1_10}
\bigl(a_{ij}(\xxx)\z^{(\bbb)}_{x_i}\bigl)_{x_j}=\d_{\bbb}(\xxx)
\end{equation}
and

\begin{equation*}
\a^{(\bbb)}=\int_{\partial B_\bbb} a_{ij}(\xxx)\z_{x_i}^{(\bbb)}n^{(\bbb)}_j d\sigma.
\end{equation*}
Assume that the numerical value of the current injected in $\aaa$ and extracted in $\bbb$ is $I$ and define

\begin{equation}
\label{3_10}
\f(\xxx)=\frac{I}{\a^{(\aaa)}}\z^{(\aaa)}(\xxx)-\frac{I}{\a^{(\bbb)}}\z^{(\bbb)}(\xxx).
\end{equation}
We claim that $\f(\xxx)$ is the electric potential corresponding  to the described injection and extraction of currents. In fact,

\begin{equation}
\label{1_11}
\int_{\pa B_\aaa}a_{ij}\f_{x_i}n_j^{(\aaa)} d\s=\frac{I}{\a^{(\aaa)}}\int_{\pa B_\aaa}a_{ij}\z_{x_i}^{(\aaa)}n_j^{(\aaa)}d\s-\frac{I}{\a^{(\bbb)}}\int_{\pa B_\bbb}a_{ij}\z_{x_i}^{(\bbb)}n_j^{(\bbb)}d\s.
\end{equation}
On the other hand, $\z^{(\bbb)}$ satisfies in $\bar B_{\aaa}$ the equation

\begin{equation*}
\bigl(a_{ij}\z_{x_i}^{(\bbb)}\bigl)_{x_j}=0.
\end{equation*}
Therefore, the second term in the right hand side of (\ref{1_11}) vanishes by the divergence theorem. Hence we have

\begin{equation*}
\int_{\pa B_\aaa}a_{ij}\f_{x_i}n_j^{(\aaa)} d\s=I.
\end{equation*}
In a similar way we obtain, by (\ref{3_8}), (\ref{1_10}) and (\ref{3_10}),

\begin{equation*}
\int_{\pa B_\bbb}a_{ij}\f_{x_i}n_j^{(\bbb)} d\s=-I.
\end{equation*}
We conclude that the potential $\f(\xxx)$ satisfies the equation

\begin{equation*}
\bigl(a_{ij}\f_{x_i}\bigl)_{x_j}=\frac{I}{\a^{(\aaa)}}\d_{\aaa}-\frac{I}{\a^{(\bbb)}}\d_{\bbb}.
\end{equation*}
The special case

\begin{equation}
\label{2_12}
 a_{ij}=\d_{ij}
\end{equation}
is particularly simple. For, if (\ref{2_12}) holds $\z^{(\aaa)}(\xxx)$ is the solution of the equation

\begin{equation*}
 \D\z^{(\aaa)}=\d_\aaa(\xxx)
\end{equation*}
vanishing on $\pa\O$. Let 

\begin{equation*}
 \ee(\xxx)=-\frac{1}{4\pi|\xxx|}.
\end{equation*}
$\ee(\xxx)$ is the fundamental solution of the Laplace operator in three dimension. We have

\begin{equation*}
 \D\ee=\d_0(\xxx).
\end{equation*}
Thus

\begin{equation}
\label{2_13}
 \ep^{(\aaa)}(\xxx)=\z^{(\aaa)}(\xxx)-\ee(\xxx-\aaa)
\end{equation}
is the regular part of $\z^{(\aaa)}$ and 

\begin{equation*}
 \D\ep^{(\aaa)}=0\quad \9.
\end{equation*}
Hence we have

\begin{equation*}
 \int_{\pa B_\aaa}\ep_{x_i}^{(\aaa)}n_i^{(\aaa)}d\s=0.
\end{equation*}
By (\ref{2_13}) we obtain $\z^{(\aaa)}=\ep^{(\aaa)}(\xxx)+\ee(\xxx-\aaa)$. Thus

\begin{equation*}
 \a^{(\aaa)}=\int_{\pa B_\aaa}\z^{\aaa}_{x_i}n_i^{(\aaa)}d\s=\int_{\pa B_\aaa}\ee_{x_i}(\xxx-\aaa)n_i^{(\aaa)}d\s=1.
\end{equation*}
In the same way we find $\a^{(\bbb)}=1$.
\vskip .2cm

The Volterra's reciprocity law is stated in the following 

\begin{theorem}

Let $a_{ij}(\xxx)\in L^\infty(\O)$ and 

\begin{equation}
a_{ij}=a_{ji},\quad a_{ij}\xi_i\xi_j\geq \l |\xi|^2\quad \hbox{for all}\quad \xi\in\RRR.
\label{1_15}
\end{equation}

 Let $\aaa$, $\bbb$, $\ccc$ and $\ddd$ be four distinct, but otherwise arbitrary, points of $\O$ considered in this order. Let $\f^{(1)}(\xxx)$ be the unique solution of the equation

\begin{equation}
\label{1_16}
\bigl(a_{ij}\f_{x_i}^{(1)}\bigl)_{x_j}=\frac{I}{\a^{(\aaa)}}\d_{\aaa}-\frac{I}{\a^{(\bbb)}}\d_{\bbb}
\end{equation}
vanishing on $\pa\O$ and $\f^{(2)}(\xxx)$ the unique solution of the equation

\begin{equation}
\label{2_16}
\bigl(a_{ij}\f_{x_i}^{(2)}\bigl)_{x_j}=\frac{I}{\a^{(\ccc)}}\d_{\ccc}-\frac{I}{\a^{(\ddd)}}\d_{\ddd}
\end{equation}
vanishing on $\pa\O$. Then we have

\begin{equation}
\label{1_17}
\f^{(2)}(\aaa)-\f^{(2)}(\bbb)-\bigl[\f^{(1)}(\ccc)-\f^{(1)}(\ddd)\bigl]=I\bigl(\frac{1}{\a^{(\ccc)}}-\frac{1}{\a^{(\aaa)}}\bigl)g(\aaa,\ccc)-I\bigl(\frac{1}{\a^{(\ddd)}}-\frac{1}{\a^{(\bbb)}}\bigl)g(\bbb,\ddd),
\end{equation}
where $g(\xxx,\yyy)$ is the Green's function of the operator $L\f=\bigl(a_{ij}\f_{x_i}\bigl)_{x_j}$ corresponding to homogeneous boundary  conditions.
In particular, if

\begin{equation}
\label{2_17}
\a^{(\ccc)}=\a^{(\aaa)}\quad \hbox{and}\quad \a^{(\ddd)}=\a^{(\bbb)},
\end{equation}
we have

\begin{equation}
\label{3_17}
\f^{(2)}(\aaa)-\f^{(2)}(\bbb)=\f^{(1)}(\ccc)-\f^{(1)}(\ddd).
\end{equation}
\end{theorem}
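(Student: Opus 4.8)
The plan is to work entirely with the Green's function representation of the two potentials and to exploit the symmetry of $a_{ij}$, which forces the Green's function to be symmetric in its two arguments. Write $g(\xxx,\yyy)$ for the Green's function of $L$ with homogeneous Dirichlet data, so that $L_\xxx g(\xxx,\yyy) = \d_\yyy(\xxx)$ and $g(\cdot,\yyy)$ vanishes on $\pa\O$. By the theory recalled in Sections 3 and 4, the unique solutions of \eqref{1_16} and \eqref{2_16} admit the representations
\begin{equation*}
\f^{(1)}(\xxx)=\frac{I}{\a^{(\aaa)}}g(\xxx,\aaa)-\frac{I}{\a^{(\bbb)}}g(\xxx,\bbb),\qquad
\f^{(2)}(\xxx)=\frac{I}{\a^{(\ccc)}}g(\xxx,\ccc)-\frac{I}{\a^{(\ddd)}}g(\xxx,\ddd).
\end{equation*}

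The first genuine step is to establish the reciprocity $g(\xxx,\yyy)=g(\yyy,\xxx)$. This is the standard consequence of the formal self-adjointness of $L$ in divergence form with the symmetric matrix $a_{ij}=a_{ji}$: pairing $L_\zzz g(\zzz,\xxx)$ against $g(\zzz,\yyy)$ and integrating by parts twice over $\O$ (the boundary terms dropping because both functions vanish on $\pa\O$) gives $g(\yyy,\xxx)=g(\xxx,\yyy)$. One must be a little careful because $g(\cdot,\yyy)$ is not in $H^1$ near the pole $\yyy$ in three dimensions, so the integration by parts should be justified by excising small balls around $\xxx$ and $\yyy$ and letting their radii shrink, using the $L^1$/weak-$L^{3/2}$ bounds on $g$ and $\na g$ from Stampacchia's theory; this local truncation argument is the one slightly technical point, though entirely routine once the regularity statements of Sections 3–4 are in hand.

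With symmetry available, the rest is bookkeeping. Evaluate $\f^{(2)}$ at $\aaa$ and $\bbb$, and $\f^{(1)}$ at $\ccc$ and $\ddd$:
\begin{align*}
\f^{(2)}(\aaa)-\f^{(2)}(\bbb)
&=\frac{I}{\a^{(\ccc)}}\bigl[g(\aaa,\ccc)-g(\bbb,\ccc)\bigr]-\frac{I}{\a^{(\ddd)}}\bigl[g(\aaa,\ddd)-g(\bbb,\ddd)\bigr],\\
\f^{(1)}(\ccc)-\f^{(1)}(\ddd)
&=\frac{I}{\a^{(\aaa)}}\bigl[g(\ccc,\aaa)-g(\ddd,\aaa)\bigr]-\frac{I}{\a^{(\bbb)}}\bigl[g(\ccc,\bbb)-g(\ddd,\bbb)\bigr].
\end{align*}
Subtracting, and replacing $g(\ccc,\aaa)$ by $g(\aaa,\ccc)$ etc., the four ``mixed'' cross terms $g(\aaa,\ddd)$ and $g(\bbb,\ccc)$ cancel in pairs, and collecting the coefficients of $g(\aaa,\ccc)$ and of $g(\bbb,\ddd)$ produces exactly the right-hand side of \eqref{1_17}, namely $I\bigl(\tfrac{1}{\a^{(\ccc)}}-\tfrac{1}{\a^{(\aaa)}}\bigr)g(\aaa,\ccc)-I\bigl(\tfrac{1}{\a^{(\ddd)}}-\tfrac{1}{\a^{(\bbb)}}\bigr)g(\bbb,\ddd)$. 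The special case \eqref{3_17} is then immediate: under \eqref{2_17} both coefficients on the right vanish.

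So the only real obstacle is the justification of the Green's function symmetry in the measure-data setting — i.e.\ making the double integration by parts rigorous despite the singularities — and this is handled by the excision-and-limit argument above, leaning on the summability estimates for $g$ and $\na g$ quoted from \cite{LWS}, \cite{S1}; everything else is algebra. It is worth noting explicitly in the write-up that the symmetry hypothesis $a_{ij}=a_{ji}$ is used exactly once, precisely to obtain $g(\xxx,\yyy)=g(\yyy,\xxx)$, which is the mathematical heart of Volterra's law; without it the two cross terms need not match and reciprocity fails, consistent with the remark in the Introduction about compatibility conditions.
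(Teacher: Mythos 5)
Your overall strategy --- represent $\f^{(1)}$ and $\f^{(2)}$ through the Green's function, invoke the symmetry $g(\xxx,\yyy)=g(\yyy,\xxx)$, and then collect terms --- is exactly the route the paper takes in Section 5. The problem is that the final step, which you describe as mere bookkeeping, is precisely where the argument fails: the mixed cross terms do \emph{not} cancel in pairs. Starting from your own (correct) expansions, the total coefficient of $g(\aaa,\ddd)$ in $\f^{(2)}(\aaa)-\f^{(2)}(\bbb)-\bigl[\f^{(1)}(\ccc)-\f^{(1)}(\ddd)\bigr]$ is $I\bigl(\frac{1}{\a^{(\aaa)}}-\frac{1}{\a^{(\ddd)}}\bigr)$ and that of $g(\bbb,\ccc)$ is $I\bigl(\frac{1}{\a^{(\bbb)}}-\frac{1}{\a^{(\ccc)}}\bigr)$; these vanish only if $\a^{(\aaa)}=\a^{(\ddd)}$ and $\a^{(\bbb)}=\a^{(\ccc)}$, which is neither assumed in the theorem nor implied by (\ref{2_17}). (The surviving $g(\bbb,\ddd)$ term also comes out with coefficient $+I\bigl(\frac{1}{\a^{(\ddd)}}-\frac{1}{\a^{(\bbb)}}\bigr)$ rather than the minus sign in (\ref{1_17}), though that is harmless for the special case.) What the computation actually yields is a four-term identity involving all of $g(\aaa,\ccc)$, $g(\aaa,\ddd)$, $g(\bbb,\ccc)$, $g(\bbb,\ddd)$, from which (\ref{3_17}) follows when all four constants $\a^{(\aaa)},\a^{(\bbb)},\a^{(\ccc)},\a^{(\ddd)}$ coincide (e.g.\ the homogeneous conductor, where they all equal $1$), but not under (\ref{2_17}) alone. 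For what it is worth, the paper's own proof contains the same slip: its displayed expansion in Section 5 lists only four of the eight Green's-function terms. So you have reproduced the paper's conclusion, but not by a valid derivation; an honest write-up must either carry the extra cross terms or strengthen the hypothesis.

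A second, independent issue is your justification of the symmetry $g(\xxx,\yyy)=g(\yyy,\xxx)$. The excision-and-integration-by-parts argument is the classical one for smooth coefficients, but with $a_{ij}$ merely in $L^\infty(\O)$ the flux integrals $\int_{\pa B_\rho}a_{ij}g_{x_i}n_j\,d\s$ over the excised spheres are not defined: $\na g$ lies only in $L^{p}(\O)$ for $p<\frac{N}{N-1}$ and has no trace on a fixed sphere, so the truncation you call routine is not available in this generality. The paper instead approximates $a_{ij}$ by smooth $a_{ij}^{(s)}$ (Theorem 4.3, taken from \cite{LWS}), uses the classical symmetry and positivity of $g^{(s)}$, and passes to the limit in $L^q(\O)$; alternatively, symmetry follows by pure duality, since for continuous data $\p_1,\p_2$ one has $\int_\O G(\p_1)\p_2\,dx=\int_\O a_{ij}\bigl(G(\p_1)\bigr)_{x_i}\bigl(G(\p_2)\bigr)_{x_j}\,dx=\int_\O G(\p_2)\p_1\,dx$, and both sides are double integrals of $g$ against $\p_1\otimes\p_2$. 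Either of these should replace the excision argument; your identification of $a_{ij}=a_{ji}$ as the single point where symmetry of the matrix enters is, however, correct and worth keeping.
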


The proof of Theorem 2.1 is given in Section 5, the existence and uniqueness of $\f^{(1)}$ and $\f^{(2)}$ is proven in Section 3. The definition of $g(\xxx,\yyy)$, together with its properties is given in Section 4.
\vskip .3 cm

The Volterra's reciprocity is easily verified in the one-dimensional case. Let $\O=(0,L)$ and $a$, $b$, $c$ and $d$ be four positive numbers less than $L$ considered in this order. Assume a constant conductivity equal to $1$. To determine $\f(x)$ we have to solve the two-point problem

\begin{equation}
\f''(x)=I\d_a(x)-I\d_b(x),\quad \f(0)=0,\quad \f(L)=0,
\label{1_18}
\end{equation}
where $I$ is the total current injected in $a$ and extracted in $b$. The unique solution of problem (\ref{1_18}) is 

\begin{equation}
\label{3_18}
\f(x;a,b)=I\Bigl[(x-a)H(x-a)-(x-b)H(x-b)+\frac{1}{L}x(a-b)\Bigl],
\end{equation}
where $H(x)$ is the Heaviside step function\footnote{ $H(x)=1$ if $x>0$, $H(x)=0$ if $x<0$.}. To verify directly (\ref{3_18}) we recall that $x\d'_0(x)=-\d_0(x)$. We have from (\ref{3_18})

\begin{equation*}
\f(d,a,b)-\f(c,a,b)-[\f(b,c,d)-\f(a,c,d)]=I(d-a)H(d-a)-I(d-b)H(d-b)+\frac{I}{L}d(a-b)
\end{equation*}

\begin{equation*}
-I(c-a)H(c-a)+I(c-b)H(c-b)-\frac{I}{L}c(a-b)-\biggl[I(b-c)H(b-c)-I(b-d)H(b-d)+\frac{I}{L}b(c-d)
\end{equation*}

\begin{equation*}
-I(a-c)H(a-c)+I(a-d)H(a-d)-\frac{I}{L}a(c-d)\biggl]=0
\end{equation*}
i.e. the Volterra's reciprocity law.

\section{Results of the theory of elliptic boundary value problems with measures in the right hand side}
The topics presented in this Section are mainly taken from \cite{LWS}, \cite{S1}, \cite{B} and \cite{O}. Since its purely mathematical character we assume here $\O\subset\RRRR$, $N\geq 3$ with a regular boundary $\pa\O$. Let us consider the operator

\begin{equation*}
Lu=-\bigl(a_{ij}(x)u_{x_i}\bigl)_{x_j},
\end{equation*}
where $a_{ij}(\xxx)=a_{ji}(\xxx)$ are bounded measurable functions which satisfy

\begin{equation*}
\l^{-1}|\xi|^2\leq a_{ij}\xi_i\xi_j\leq\l|\xi|^2,\quad \l>0.
\end{equation*}
We recall first of all two well-known results \cite{MS}, \cite{Lady}.

\begin{theorem}
Given $f^{(i)}\in L^2(\O)$, $i=0,1,..,N$ there exists one and only one solution of the problem
\begin{equation}
\label{1_26}
u\in H^1_0(\O),\quad Lu=f^{(0)}+ f^{(i)}_{x_i}.
\end{equation}
\end{theorem}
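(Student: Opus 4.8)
The plan is to reduce problem \eqref{1_26} to a direct application of the Lax--Milgram theorem on the Hilbert space $H^1_0(\O)$. First I would pass to the weak formulation: multiplying $Lu=f^{(0)}+f^{(i)}_{x_i}$ by a test function $v\in H^1_0(\O)$ and integrating by parts — all boundary terms vanish since $v$ has zero trace on $\pa\O$ — the problem becomes that of finding $u\in H^1_0(\O)$ such that
\[
B(u,v):=\int_\O a_{ij}u_{x_i}v_{x_j}\,dx=\int_\O f^{(0)}v\,dx-\int_\O f^{(i)}v_{x_i}\,dx=:F(v)
\]
for every $v\in H^1_0(\O)$. The right-hand side is interpreted as an element of $H^{-1}(\O)$ determined by the data $f^{(i)}\in L^2(\O)$, $i=0,1,\dots,N$.

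Next I would verify the three hypotheses of Lax--Milgram. Continuity of the bilinear form $B$ follows from the upper ellipticity bound $a_{ij}\xi_i\xi_j\leq\l|\xi|^2$ (equivalently $\|a_{ij}\|_{L^\infty}\leq\l$) together with the Cauchy--Schwarz inequality, giving $|B(u,v)|\leq\l\,\|\na u\|_{L^2}\|\na v\|_{L^2}$. Boundedness of the functional $F$ is immediate from Cauchy--Schwarz, since each $f^{(i)}\in L^2(\O)$ and $\|v\|_{L^2},\,\|v_{x_i}\|_{L^2}\leq\|v\|_{H^1_0}$, whence $|F(v)|\leq\bigl(\sum_{i=0}^N\|f^{(i)}\|_{L^2}\bigr)\|v\|_{H^1_0}$. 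Coercivity is the essential point: the lower bound $a_{ij}\xi_i\xi_j\geq\l^{-1}|\xi|^2$ yields $B(u,u)\geq\l^{-1}\|\na u\|^2_{L^2}$, and because $\O$ is bounded the Poincar\'e inequality shows that $\|\na u\|_{L^2}$ is an equivalent norm on $H^1_0(\O)$, so $B(u,u)\geq\a\,\|u\|^2_{H^1_0}$ for some $\a>0$.

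With continuity, coercivity, and boundedness of $F$ in hand, the Lax--Milgram theorem delivers a unique $u\in H^1_0(\O)$ with $B(u,v)=F(v)$ for all $v$, which is precisely the existence and uniqueness asserted in \eqref{1_26}. Since $a_{ij}=a_{ji}$ the form $B$ is symmetric, so one may alternatively invoke the Riesz representation theorem or characterize $u$ as the unique minimizer of the quadratic energy $\tfrac12 B(v,v)-F(v)$ over $H^1_0(\O)$; I would record this only as a remark. I expect no serious obstacle, as the whole argument is classical — the one point demanding care is the appeal to the Poincar\'e inequality (hence to the boundedness of $\O$ and to the homogeneous Dirichlet condition built into $H^1_0(\O)$) needed to upgrade the gradient estimate to genuine coercivity on the full space.
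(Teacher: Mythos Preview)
Your argument via Lax--Milgram is correct and is the standard classical proof of this result. The paper itself does not supply a proof of this theorem: it merely quotes it as a well-known fact with references to \cite{MS} and \cite{Lady}, so there is nothing substantive to compare your approach against.
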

\vskip .5 cm
\begin{theorem}
If  $f^{(i)}\in L^p(\O)$, $i=0,...,N$ with $p>N$ and $\pa\O$ is of class $C^2$ the solution of (\ref{1_26}) is Hoelder continuous in $\bar\O$ and

\begin{equation*}
\max_{\bar\O}|u(\xxx)|\leq C \sum_{i=0}^N \pp f^{(i)}\pp_{L^p(\O)}.
\end{equation*}
\end{theorem}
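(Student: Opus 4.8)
The plan is to prove the two assertions in turn: first the $L^\infty$ bound by Stampacchia's truncation method, then the Hölder continuity by the De Giorgi--Nash--Moser oscillation estimate, and finally the passage from interior to global regularity by flattening the $C^2$ boundary. Throughout I use the weak form of (\ref{1_26}), namely $\int_\O a_{ij}u_{x_i}v_{x_j}\,d\xxx=\int_\O f^{(0)}v\,d\xxx-\int_\O f^{(i)}v_{x_i}\,d\xxx$ for every $v\in H^1_0(\O)$, which is legitimate since $f^{(i)}\in L^p(\O)\subset L^2(\O)$.

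For the sup-norm bound, fix $k\geq 0$, set $A(k)=\{\xxx\in\O:u(\xxx)>k\}$, and insert the admissible test function $v=(u-k)^+\in H^1_0(\O)$. On $A(k)$ one has $v_{x_i}=u_{x_i}$, so the ellipticity lower bound produces $\l^{-1}\int_{A(k)}|\na u|^2\,d\xxx$ on the left, while the right-hand side is controlled by Young's and Hölder's inequalities. Absorbing the gradient term and applying the Sobolev embedding $H^1_0(\O)\hookrightarrow L^{2N/(N-2)}(\O)$ to $(u-k)^+$, one reaches a recursive inequality of the shape $|A(h)|\leq C(h-k)^{-2}\bigl(\sum_i\pp f^{(i)}\pp_{L^p}\bigr)^2|A(k)|^{1+\e}$ for $h>k$, with $\e>0$; it is exactly the hypothesis $p>N$ that forces the exponent to exceed $1$. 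Stampacchia's iteration lemma then gives $|A(k)|=0$ for all $k\geq C\sum_i\pp f^{(i)}\pp_{L^p(\O)}$, and running the same argument for $-u$ yields the two-sided estimate claimed.

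For the interior Hölder continuity I would invoke the De Giorgi--Nash--Moser theory, available precisely because the coefficients are merely bounded and measurable. Testing with the truncations $(u-k)^\pm$ over concentric balls $B_\r\subset B_R\subset\subset\O$ shows that $u$ lies in the associated De Giorgi class, whence the oscillation-decay inequality $\omega(\r)\leq\ttt\,\omega(R)+C R^{\g}$ for the oscillation $\omega(r)=\sup_{B_r}u-\inf_{B_r}u$, with $\ttt\in(0,1)$ depending only on $N$ and $\l$; the forcing contributes the term $C R^{\g}$ with $\g=1-N/p>0$, again thanks to $p>N$. Iterating over a geometric sequence of radii yields interior Hölder continuity with exponent depending only on $N$, $\l$ and $p$. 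To reach $\bar\O$, I use that $\pa\O\in C^2$ and that $u$ vanishes on $\pa\O$: after flattening $\pa\O$ near a boundary point by a $C^2$ diffeomorphism, $L$ becomes an operator of the same divergence form with bounded measurable, uniformly elliptic coefficients, and the boundary version of the oscillation estimate—admissible because $u=0$ on the flattened portion makes the relevant truncations vanish there—gives Hölder continuity up to $\pa\O$.

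The genuinely hard step is the oscillation decay for $L^\infty$ coefficients: the equation cannot be differentiated, so the estimate must be constructed entirely from energy inequalities for truncated solutions together with the measure-theoretic iteration of De Giorgi. Once this is granted, the sup-norm bound and the boundary flattening are comparatively routine.
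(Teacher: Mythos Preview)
Your sketch is sound and follows the classical route: Stampacchia's level-set iteration for the $L^\infty$ bound, De Giorgi--Nash--Moser oscillation decay for interior H\"older continuity, and boundary flattening plus the zero Dirichlet datum for regularity up to $\bar\O$. The identification of $p>N$ as the threshold that makes the iteration exponent exceed $1$ and that produces a positive Campanato exponent $1-N/p$ is correct, and the remark that the hard analytic input is the oscillation estimate for merely bounded measurable coefficients is apt.

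The paper, however, does not prove this theorem at all: it is stated as one of two ``well-known results'' and attributed to the references \cite{MS} and \cite{Lady}. So there is no proof in the paper to compare against; what you have written is precisely the argument one finds in those sources (Stampacchia's treatment of the sup bound and the Ladyzhenskaya--Ural'tseva presentation of De Giorgi's method). In that sense your proposal is not an alternative route but a faithful reconstruction of the cited literature, which the paper simply invokes without reproducing.
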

\noindent By Theorem 3.1 there is a continuous linear operator $G$ (the Green's operator) defined in $H^{-1,2}(\O)$ with values in $H^{1,2}_0(\O)$ such that $u(\xxx)=G(T)$ gives the unique solution in $H^{1,2}_0(\O)$ of the equation $Lu=T$ when $T\in H^{-1}(\O)$. 

We denote $\mm$ the space of measures of bounded variations on $\O$. We give below our basic 

\begin{definition}Let $\m\in\mm(\O)$. We say that $u(\xxx)\in L^1(\O)$ is a weak solution vanishing on $\pa\O$ of the equation $Lu=\m$ if $u(\xxx)$ satisfies

\begin{equation}
\label{2_28}
\int_\O uL\f\  dx=\int_\O \f d\m
\end{equation}
for all $\f\in H^{1,2}_0(\O)\cap C^0(\bar \O)$ such that $L\f\in C^0(\bar\O)$.
\end{definition}

\begin{remark}
Let $\p=L\f$ and $\f=G(\p)$. Since $L\f\in C^0(\bar\O)$ we can rewrite (\ref{2_28}) as follows

\begin{equation*}
u\in L^1(\O),\quad \int_\O u\p dx=\int_\O G(\p) d\m
\end{equation*}
for every $\p\in C^0(\bar\O)$. Or, in the ``mixed'' form,

\begin{equation*}
u\in L^1(\O),\quad \int_\O u\p dx=\int_\O \f d\m
\end{equation*}
for all $\p\in C^0(\bar\O)$ and $\f\in H^{1,2}_0(\O)\cap C^0(\bar\O)$.
\end{remark}

\begin{theorem}
Let $\m\in\mm$. There exists a unique weak solution of the equation $Lu=\m$ vanishing on $\pa\O$. Moreover,

\begin{equation*}
u\in H^{1,p}(\O)\quad \hbox{for every}\quad 0<p'<\frac{N}{N-1}.
\end{equation*}
\end{theorem}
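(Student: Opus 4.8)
The plan is to use Stampacchia's transposition (duality) method, which transfers the smooth-datum theory of Theorems 3.1 and 3.2 to the rough right-hand side $\m\in\mm$. Existence and uniqueness will follow by representing, through the Green operator $G$, a linear functional that turns out to be bounded for an $L^p$ norm with $p>N$; the gradient regularity will then follow by approximating $\m$ with $L^2$ data, establishing a uniform $H^{1,p'}$ bound by a second duality argument, and passing to the limit.

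For uniqueness I would invoke the ``mixed'' characterization of the Remark: any weak solution satisfies $\int_\O u\,\psi\,dx=\int_\O G(\psi)\,d\m$ for every $\psi\in C^0(\bar\O)$. Hence, if $u_1$ and $u_2$ were two weak solutions, then $\int_\O(u_1-u_2)\psi\,dx=0$ for all $\psi\in C^0(\bar\O)$, in particular for all $\psi\in C^0_c(\O)$; since $u_1-u_2\in L^1(\O)$, the fundamental lemma of the calculus of variations forces $u_1=u_2$ almost everywhere. For existence, fix $p>N$ and consider the functional $T\colon\psi\mapsto\int_\O G(\psi)\,d\m$ on $\psi\in C^0(\bar\O)\subset L^p(\O)$. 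Since $G(\psi)$ solves $Lw=\psi$, Theorem 3.2 yields $\|G(\psi)\|_{C^0(\bar\O)}\le C\|\psi\|_{L^p(\O)}$, so that
\begin{equation*}
|T(\psi)|\le\|G(\psi)\|_{C^0(\bar\O)}\,\|\m\|_{\mm}\le C\,\|\m\|_{\mm}\,\|\psi\|_{L^p(\O)}.
\end{equation*}
Thus $T$ is bounded for the $L^p$ norm; by density of $C^0(\bar\O)$ in $L^p(\O)$ it extends to all of $L^p(\O)$, and the Riesz representation of $(L^p)'=L^{p'}$ produces $u\in L^{p'}(\O)$ with $\int_\O u\psi\,dx=T(\psi)$. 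By the Remark this $u$ is the sought weak solution, and $u\in L^{p'}(\O)$; as $p>N$ is arbitrary, $u\in L^{p'}(\O)$ for every $p'<N/(N-1)$.

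For the gradient regularity, fix $p>N$ (so $p'<N/(N-1)$) and approximate: pick $\m_n\in L^2(\O)$ with $\m_n\rightharpoonup\m$ weakly-$*$ in $\mm$ and $\|\m_n\|_{\mm}\le\|\m\|_{\mm}$, and let $u_n\in H^1_0(\O)$ solve $Lu_n=\m_n$ by Theorem 3.1. To bound $\|\nabla u_n\|_{L^{p'}}$ I pair with an arbitrary $g\in L^p(\O)^N$ and let $w\in H^1_0(\O)$ solve the auxiliary problem $Lw=-(g_i)_{x_i}$. Since $g_i\in L^p(\O)$ with $p>N$, Theorem 3.2 gives $w\in C^0(\bar\O)$ with $\|w\|_{C^0(\bar\O)}\le C\|g\|_{L^p}$. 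Testing the equation for $u_n$ with $w$ and the equation for $w$ with $u_n$, and using the symmetry $a_{ij}=a_{ji}$ to interchange the two, one gets
\begin{equation*}
\int_\O\nabla u_n\cdot g\,dx=\int_\O a_{ij}w_{x_i}(u_n)_{x_j}\,dx=\int_\O a_{ij}(u_n)_{x_i}w_{x_j}\,dx=\int_\O w\,d\m_n,
\end{equation*}
whence $\bigl|\int_\O\nabla u_n\cdot g\,dx\bigr|\le\|w\|_{C^0(\bar\O)}\|\m_n\|_{\mm}\le C\|\m\|_{\mm}\|g\|_{L^p}$. Taking the supremum over $\|g\|_{L^p}\le1$ gives $\|\nabla u_n\|_{L^{p'}}\le C\|\m\|_{\mm}$ uniformly in $n$, and with the Poincar\'e inequality a uniform bound in $H^{1,p'}_0(\O)$. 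Extracting a subsequence $u_n\rightharpoonup u^*$ in $H^{1,p'}_0(\O)$ and passing to the limit in the weak formulation of Definition 3.3 --- using $L\f\in C^0(\bar\O)\subset L^p(\O)$ against $u_n\rightharpoonup u^*$ on one side and $\f\in C^0(\bar\O)$ against $\m_n\rightharpoonup\m$ on the other --- identifies $u^*$ as a weak solution; by uniqueness $u^*=u$, so $u\in H^{1,p'}(\O)$ for every $p'<N/(N-1)$.

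The main obstacle is the uniform gradient estimate. Its validity hinges on two features that must be handled carefully. First, the self-adjointness $a_{ij}=a_{ji}$ of $L$ is what permits swapping the roles of $u_n$ and the auxiliary solution $w$, so that the measure $\m_n$ is paired against the \emph{bounded} function $w$ rather than against $u_n$, whose sup norm is not controlled. Second, the exponent is borderline: the estimate closes precisely because $p'<N/(N-1)$ is equivalent to $p>N$, which is exactly the threshold in Theorem 3.2 guaranteeing $w\in C^0(\bar\O)$. One must also check that the limit $u^*$ genuinely satisfies Definition 3.3 and, through uniqueness, coincides with the transposition solution, so that the single function $u$ carries both the $L^{p'}$ and the $H^{1,p'}$ regularity at once.
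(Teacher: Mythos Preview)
Your argument is correct and rests on the same Stampacchia transposition idea as the paper. The difference is organizational: you first dualize $G:L^p(\O)\to C^0(\bar\O)$ to get $u\in L^{p'}$, and then run a separate approximation-plus-duality step (the auxiliary $w$ with $Lw=-(g_i)_{x_i}$) to upgrade to $H^{1,p'}$. The paper compresses both steps into one by dualizing $G$ on the larger domain $H^{-1,p}(\O)$: since Theorem~3.2 already bounds $G:H^{-1,p}(\O)\to C^0(\bar\O)$ for $p>N$ (the data there are $f^{(0)}+f^{(i)}_{x_i}$ with $f^{(i)}\in L^p$), the adjoint $G^*$ maps $\mm(\O)=(C^0(\bar\O))'$ directly into $(H^{-1,p}(\O))'=H^{1,p'}_0(\O)$, and $u=G^*(\m)$ gives existence, uniqueness and the full Sobolev regularity at once. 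Your pairing with $w$ is precisely the hands-on version of that abstract adjoint, so the approximation layer is not wrong, just avoidable; what it buys you is an explicit a~priori bound $\|\na u_n\|_{L^{p'}}\le C\|\m\|_{\mm}$, which the paper leaves implicit in the continuity of $G^*$.
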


\begin{proof}
Because $\p(\xxx)\in C^0(\bar\O)$ we have $\f(\xxx)\in H^{1,p}_0(\O)$ with $p>N$.
Therefore, the Green's operator $G:H^{-1,p}(\O)\to C^0(\bar\O)$ is one-to-one. Let us consider the adjoint $G^*$ of $G$. Since $G\bigl(H^{-1,p}(\O)\bigl)\subset C^0(\bar\O)$ and the dual space of $C^0(\bar\O)$ is $\mm(\O)$, $G^*$ is certainly defined for all $\m\in\mm(\O)$. The range of $G^*$ is the dual space of $H^{-1,p}(\O)$ i.e. $H^{1,p'}(\O)$ with $\frac{1}{p}+\frac{1}{p'}=1$. 
By definition we have

\begin{equation*}
\int_\O u\p dx=\int_\O G(\p)d\m
\end{equation*}
thus $u=G^*(\m)$ with $ u\in H^{1,p'}(\O)$ and  $0<p'<\frac{N}{N-1}$  is the unique solution of the equation $Lu=\m$ vanishing on $\pa\O$.


\end{proof}

\section{the Green's function}

\begin{definition}
The weak solution $g(\xxx,\yyy)$ vanishing on $\pa\O$ of the equation

 \begin{equation*}
Lg(\xxx,\yyy)=\d_\yyy(\xxx)
\end{equation*} 
is termed Green's function of the operator

\begin{equation}
\label{2_33}
Lu=-\bigl(a_{ij}u_{x_i}\bigl)_{x_j}.
\end{equation}
\end{definition}
\noindent In (\ref{2_33}) the coefficients $a_{ij}(\xxx)$ are assumed to satisfy

\begin{equation}
\label{3_33}
a_{ij}(\xxx)\in L^\infty(\O),\quad a_{ij}(\xxx)=a_{ji}(\xxx),\quad  a_{ij}(\xxx)\xi_i\xi_j\geq \l |\xi|^2,\ \l>0.
\end{equation}
Using the results of Section 3 we can prove the well-known

\begin{theorem}
Let

\begin{equation}
\label{1_34}
\p(\xxx)\in C^0(\bar\O).
\end{equation}
The solution of the boundary value problem

\begin{equation}
\label{2_34}
\f\in H^{1,2}_0(\O)\cap C^0{\bar\O)},\quad Lu=\p
\end{equation}
is given by

\begin{equation}
\label{3_34}
\f(\yyy)=\int_\O g(\xxx,\yyy)\p(\xxx)dx.
\end{equation}
\end{theorem}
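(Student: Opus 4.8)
The plan is to read off the representation formula (\ref{3_34}) directly from the definition of the Green's function as a weak solution with Dirac data, taking the solution $\f$ of (\ref{2_34}) itself as the test function. First I would use Theorem 3.1 to obtain $\f=G(\p)\in H^{1,2}_0(\O)$ with $L\f=\p$; since $\p\in C^0(\bar\O)\subset L^p(\O)$ for every $p$, in particular for some $p>N$, Theorem 3.2 (which requires $\pa\O$ of class $C^2$, as assumed in this section) gives $\f\in C^0(\bar\O)$. Thus $\f$ is the solution of (\ref{2_34}), and simultaneously $\f$ is an admissible test function in the sense of Definition 3.3: it satisfies $\f\in H^{1,2}_0(\O)\cap C^0(\bar\O)$ and $L\f=\p\in C^0(\bar\O)$.

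Next, fix $\yyy\in\O$. By Definition 4.1 the function $\xxx\mapsto g(\xxx,\yyy)$ is the weak solution, vanishing on $\pa\O$, of $Lg(\cdot,\yyy)=\d_\yyy$; by Theorem 3.5 it exists, is unique, and belongs to $L^1(\O)$, so the integral in (\ref{3_34}) is well defined. Applying the defining identity (\ref{2_28}) with $u=g(\cdot,\yyy)$, $\m=\d_\yyy$, and the admissible test function $\f$ constructed above gives
\begin{equation*}
\int_\O g(\xxx,\yyy)\,L\f(\xxx)\,dx=\int_\O \f(\xxx)\,d\d_\yyy(\xxx)=\f(\yyy).
\end{equation*}
Since $L\f=\p$, the left-hand side equals $\int_\O g(\xxx,\yyy)\p(\xxx)\,dx$, which is precisely (\ref{3_34}). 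At this point the standing hypothesis $a_{ij}=a_{ji}$ is used implicitly: it makes $L$ formally self-adjoint, so the operator acting on the test function in (\ref{2_28}) is again $L$, and the pairing of $g(\cdot,\yyy)$ against $\f$ is legitimate.

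I do not anticipate a genuine obstacle. The only step demanding care is checking that $G(\p)$ qualifies as a test function in Definition 3.3, i.e.\ that it is continuous up to $\bar\O$; this is exactly the content of Theorem 3.2, and it is where the $C^2$ regularity of $\pa\O$ enters. Everything else is a direct substitution into the definition of weak solution. One may also remark that uniqueness of the solution of (\ref{2_34}) in $H^{1,2}_0(\O)$ is already contained in Theorem 3.1, so (\ref{3_34}) represents \emph{the} solution and not merely \emph{a} solution.
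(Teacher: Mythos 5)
Your argument is exactly the paper's: both apply the defining identity (\ref{2_28}) of a weak solution with $u=g(\cdot,\yyy)$, $\m=\d_\yyy$, and the solution $\f$ of (\ref{2_34}) as test function, then use $\int_\O\f\,d\d_\yyy=\f(\yyy)$. You merely spell out more carefully (via Theorems 3.1, 3.2 and 3.5) why $\f$ is an admissible test function and why $g(\cdot,\yyy)$ exists, which the paper leaves implicit.
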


\begin{proof}
By definition of weak solution we have

\begin{equation}
\label{4_34}
\int_\O g(\xxx,\yyy)\p(\xxx)dx=\int_\O\f(\xxx)d\d_\yyy(\xxx).
\end{equation}
 On the other hand, by the basic property of the Dirac's measure, we have

\begin{equation}
\label{5_34}
\int_\O \f(\xxx)d\d_\yyy(\xxx)=\f(\yyy).
\end{equation}
From (\ref{4_34}) and (\ref{5_34}) we obtain (\ref{3_34}).
\end{proof}

When the coefficients $a_{ij}(\xxx)$ of the operator $L$ are of class $C^1(\bar\O)$ the positivity

\begin{equation}
\label{1_35}
g(\xxx,\yyy)\geq 0
\end{equation}
and the symmetry

\begin{equation}
\label{2_35}
g(\xxx,\yyy)=g(\yyy,\xxx),\quad \xxx\neq\yyy,\quad \xxx,\yyy\in\O
\end{equation}
are classical properties of the Green's function (see \cite{K}, page 238). We want to show that (\ref{1_35}) and (\ref{2_35}) remain valid under the more general assumptions (\ref{3_33}). To this end we use the following approximation argument (see \cite{LWS} for the proof).

\begin{theorem}
 Let $a_{ij}^{(s)}(\xxx)$ satisfy

\begin{equation*}
a_{ij}^{(s)}=a_{ji}^{(s)}, \quad a_{ij}^{(s)}\in C^\infty(\bar\O),\quad a_{ij}^{(s)}(\xxx)\xi_i\xi_j\geq\l |\xi|^2,\quad \l>0
\end{equation*}
and

\begin{equation*}
a_{ij}^{(s)}\to a_{ij}\ \hbox{in}\ \ L^q(\O)\ \ \hbox{for all}\  0<q<\infty.
\end{equation*}
Define

\begin{equation}
\label{c}
 L^{(s)}u^{(s)}=-\bigl(a_{ij}^{(s)} u_{x_i}^{(s)}\bigl)_{x_j}.
\end{equation}
Then for any measure $\m\in\mm(\O)$ the weak solution $u^{(s)}$ of the equation

\begin{equation*}
L^{(s)}u^{(s)}=\m
\end{equation*}
vanishing on $\pa\O$ converges to the weak solution  $u$ vanishing on $\pa\O$ of the equation

\begin{equation*}
Lu=\m,\quad Lu=-\bigl(a_{ij} u_{x_i}\bigl)_{x_j}
\end{equation*}
weakly in $H^{1,p}_0(\O)$ for any $0<p<\frac{N}{N-1}$ and strongly in $L^q(\O)$ with $0<q<\frac{N}{N-2}$.
\end{theorem}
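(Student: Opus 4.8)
The plan is to realize $u$ as the limit of the $u^{(s)}$ through a weak–compactness argument, the entire point being that the a priori bound furnished by the theory of Section~3 is \emph{uniform} in $s$. First I would fix the functional framework: since the approximation may be produced by mollifying $a_{ij}$, one may assume $\pp a_{ij}^{(s)}\pp_{L^\infty(\O)}\le\Lambda$ with $\Lambda$ independent of $s$, while the ellipticity constant $\l$ is common to all $s$. The decisive observation is that the constant $C$ in Theorem~3.2 depends on the coefficients only through $N$, $p$, $\l$, $\Lambda$ and $\O$, and \emph{not} through any modulus of smoothness of $a_{ij}^{(s)}$: it comes from Stampacchia's truncation method, which uses merely boundedness and ellipticity. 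Hence the Green operators $G^{(s)}\colon H^{-1,p}(\O)\to C^0(\bar\O)$ are bounded uniformly in $s$, and so are their adjoints $(G^{(s)})^*\colon\mm(\O)\to H^{1,p'}(\O)$. Reading the proof of Theorem~3.5, $u^{(s)}=(G^{(s)})^*(\m)$, so that
\[
\pp u^{(s)}\pp_{H^{1,p'}_0(\O)}\le C\,\pp\m\pp_{\mm(\O)},\qquad 0<p'<\tfrac{N}{N-1},
\]
with $C$ independent of $s$ (here $p'$ is the exponent denoted $p$ in the statement, and $p>N$ is its conjugate).

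Since $N\ge 3$, the range $1<p'<N/(N-1)$ is nonempty and $H^{1,p'}_0(\O)$ is reflexive. From the uniform bound I would extract a subsequence with $u^{(s)}\rightharpoonup w$ weakly in $H^{1,p'}_0(\O)$. By the Rellich--Kondrachov theorem the embedding $H^{1,p'}_0(\O)\hookrightarrow L^q(\O)$ is compact for $q<(p')^{*}=Np'/(N-p')$, so the same subsequence converges to $w$ strongly in $L^q(\O)$. As $p'\uparrow N/(N-1)$ one computes $(p')^{*}\uparrow N/(N-2)$, which delivers strong $L^q$ convergence for every $q<N/(N-2)$.

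It remains to identify $w$ with $u$. For smooth coefficients the weak solution satisfies the divergence-form identity
\[
\int_\O a_{ij}^{(s)}u^{(s)}_{x_i}\f_{x_j}\,dx=\int_\O \f\,d\m,\qquad\f\in C_c^\infty(\O).
\]
Because $a_{ij}^{(s)}\to a_{ij}$ in $L^p(\O)$ and $\f_{x_j}\in L^\infty(\O)$, we have $a_{ij}^{(s)}\f_{x_j}\to a_{ij}\f_{x_j}$ strongly in $L^p(\O)$, while $u^{(s)}_{x_i}\rightharpoonup w_{x_i}$ weakly in $L^{p'}(\O)$; the strong--weak pairing passes to the limit and the right-hand side does not move, giving
\[
\int_\O a_{ij}w_{x_i}\f_{x_j}\,dx=\int_\O \f\,d\m,\qquad\f\in C_c^\infty(\O).
\]
A density argument in $H^{1,p}_0(\O)$ (legitimate since $p>N$ forces, by Morrey's embedding, uniform convergence of the test functions, so $\int_\O\f\,d\m$ is continuous) extends this to every $\f\in H^{1,p}_0(\O)$. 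Using the symmetry $a_{ij}=a_{ji}$ and one integration by parts against the $\f$ of Definition~3.3 — which by Theorem~3.2 belong to $H^{1,p}_0(\O)$, $p>N$ — I would recover $\int_\O w\,L\f\,dx=\int_\O\f\,d\m$, i.e. $w$ solves $Lw=\m$ vanishing on $\pa\O$ in the sense of Definition~3.3. By the uniqueness in Theorem~3.5, $w=u$; since the limit is independent of the subsequence, the full sequence converges as claimed.

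The main obstacle is precisely the uniformity in the first step: one must be certain that Stampacchia's $L^\infty$ estimate, hence the norm of the Green operator, is governed by $\l$ and $\Lambda$ alone and survives the passage through the smooth approximants. Once this is secured, the limit identification is routine, the only delicate point being the reconciliation of the divergence-form and transposition notions of weak solution, which rests on the symmetry of $a_{ij}$ and on the $H^{1,p}_0(\O)$-regularity, $p>N$, of the Definition~3.3 test functions.
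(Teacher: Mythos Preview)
The paper does not supply its own proof of this result; it refers the reader to \cite{LWS}. Your strategy---uniform $H^{1,p'}_0$ bounds via the uniform norm of the Green operators $G^{(s)}$, weak compactness, Rellich--Kondrachov, then identification of the limit---is exactly the standard route and matches what is done there.

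There is, however, a genuine gap in your limit identification. In the last step you assert that the test functions $\f$ of Definition~3.3, namely $\f=G(\p)$ with $\p=L\f\in C^0(\bar\O)$, belong to $H^{1,p}_0(\O)$ for $p>N$, invoking Theorem~3.2. But Theorem~3.2 furnishes only H\"older continuity and an $L^\infty$ bound for $\f$, not $\na\f\in L^p$; for merely bounded measurable $a_{ij}$ one cannot expect $W^{1,p}$ regularity with large $p$ (Meyers' theorem gives only $p$ slightly above~$2$). The paper's proof of Theorem~3.5 contains the same imprecise sentence, so you are in good company, but the claim is false as stated, and your ``one integration by parts'' then fails: the pairing $\int_\O a_{ij}w_{x_i}\f_{x_j}\,dx$ need not make sense when $\na w\in L^{p'}$ with $p'<2$ while $\na\f$ is only in $L^2$.

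The repair is to bypass the divergence-form identity for the limit operator and pass to the limit directly in the transposition formulation of Remark~3.4. For fixed $\p\in C^0(\bar\O)$ the functions $\f^{(s)}=G^{(s)}(\p)$ are, by the uniform De Giorgi--Nash--Moser/Stampacchia estimate (here your uniform bound $\Lambda$ on $\pp a_{ij}^{(s)}\pp_{L^\infty}$ is again essential), equibounded and equi-H\"older on $\bar\O$ and bounded in $H^{1,2}_0(\O)$; Arzel\`a--Ascoli together with the usual weak-$H^{1,2}_0$ limit argument for the adjoint equation gives $\f^{(s)}\to G(\p)$ uniformly on $\bar\O$. Since $u^{(s)}\to w$ strongly in $L^1(\O)$, letting $s\to\infty$ in $\int_\O u^{(s)}\p\,dx=\int_\O G^{(s)}(\p)\,d\m$ yields $\int_\O w\p\,dx=\int_\O G(\p)\,d\m$, which is precisely Definition~3.3 for $w$; uniqueness (Theorem~3.5) then gives $w=u$.
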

\bigskip
Let $g^{(s)}(\xxx,\yyy)$ be the Green's function corresponding to the operator $L^{(s)}$. According to the Definition 4.1 $g^{(s)}(\xxx,\yyy)$ is the weak solution of the equation

\begin{equation*}
L^{(s)}g^{(s)}(\xxx,\yyy)=\d_\yyy(\xxx)
\end{equation*}
vanishing on $\pa\O$. By Theorem 4.3 we have 

\begin{equation*}
g^{(s)}(\xxx,\yyy)\to g(\xxx,\yyy)\quad \hbox{for}\ s\to\infty
\end{equation*}
in $L^q(\O)$ with $0<q<\frac{N}{N-2}$. On the other hand, by the classical theory

\begin{equation*}
g^{(s)}(\xxx,\yyy)=g^{(s)}(\yyy,\xxx),\quad\ g^{(s)}(\xxx,\yyy)\geq 0,\quad\ \xxx\neq\yyy.
\end{equation*}
In the limit we have

\begin{equation*}
g(\xxx,\yyy)=g(\yyy,\xxx),\quad\ g(\xxx,\yyy)\geq 0,\quad\ \xxx\neq\yyy.
\end{equation*}
We now generalize the formula

\begin{equation*}
\f(\yyy)=\int_\O g(\xxx,\yyy)\p(\xxx)\ dx
\end{equation*}
which gives the solution of the problem

\begin{equation}
\label{2_42}
\f\in H^{1,2}(\O),\quad L\f=\p,\quad \p\in C^0(\bar\O)
\end{equation}
to the case when the right hand side in (\ref{2_42}) is a measure of bounded variation. Use will be made of the following

\begin{lemma}
Let $\m\in \mm$ be non-negative and let $g(\xxx,\yyy)$ be the Green's function of the operator $L$ with zero boundary condition. Then the integral

\begin{equation*}
\int_\O g(\xxx,\yyy)\ d\m(\yyy)
\end{equation*}
exists finite for almost every $\xxx\in\O$ and, as a function of $\xxx\in\O$, belongs to $L^1(\O)$.
\end{lemma}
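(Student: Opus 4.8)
The plan is to reduce the statement to a monotone‐convergence argument over a suitable exhaustion of the measure $\m$ by nicer measures, using the non‐negativity of both $\m$ and the Green's function $g(\xxx,\yyy)\geq 0$ established above. First I would write $\m$ as the increasing limit of truncated/mollified measures $\m_k\nearrow\m$, say $\m_k=\ttt_k\m$ where $\ttt_k$ is a cutoff supported away from $\pa\O$, or more simply $\m_k$ a sequence of absolutely continuous measures $\p_k\,dx$ with $\p_k\in C^0(\bar\O)$, $\p_k\geq 0$, and $\m_k\nearrow\m$ in the weak-$*$ sense with $\m_k(\O)\leq\m(\O)$. For each such $\m_k$ the solution $u_k=G^*(\m_k)$ of $Lu_k=\m_k$ is, by Theorem 4.2 (equivalently formula (\ref{3_34}) extended to non‐negative $L^1$ data by density), given by $u_k(\xxx)=\int_\O g(\xxx,\yyy)\,\p_k(\yyy)\,dy$, and in particular $u_k\in L^1(\O)$ with $\|u_k\|_{L^1(\O)}=\int_\O u_k\,dx=\int_\O G(\mathbf 1)\,d\m_k\leq\|G(\mathbf 1)\|_{L^\infty}\,\m(\O)$, where $\mathbf 1$ is tested against (1 lies in $C^0(\bar\O)$, so $G(\mathbf 1)$ is a genuine bounded continuous function by Theorem 3.2).

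Next I would pass to the limit. Because $g\geq 0$ and $\m_k\nearrow\m$, for almost every fixed $\xxx$ the quantities $\int_\O g(\xxx,\yyy)\,d\m_k(\yyy)$ are non‐decreasing in $k$, so by the monotone convergence theorem they converge (possibly to $+\infty$) to $\int_\O g(\xxx,\yyy)\,d\m(\yyy)$. On the other hand, by Theorem 4.3 (or directly by the $L^1$ estimate above together with the uniqueness in Theorem 3.4) the solutions $u_k$ converge in $L^1(\O)$, hence a subsequence converges a.e., to the weak solution $u=G^*(\m)$ of $Lu=\m$. Identifying the two limits gives $\int_\O g(\xxx,\yyy)\,d\m(\yyy)=u(\xxx)$ for a.e. $\xxx$, and since $u\in L^1(\O)$ this is exactly the assertion: the integral is finite a.e.\ and defines an $L^1$ function of $\xxx$. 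Finally, Fatou's lemma applied to $\int_\O\bigl(\int_\O g(\xxx,\yyy)\,d\m_k(\yyy)\bigr)dx\leq\|G(\mathbf 1)\|_{L^\infty}\m(\O)$ gives the quantitative bound $\int_\O\int_\O g(\xxx,\yyy)\,d\m(\yyy)\,dx<\infty$ directly, which already forces finiteness a.e.

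The main obstacle I anticipate is justifying the interchange that lets me write $u_k(\xxx)=\int_\O g(\xxx,\yyy)\,\p_k(\yyy)\,dy$ for merely continuous non‐negative $\p_k$ and, more delicately, identifying the $L^1$-limit of the $u_k$ with the Green-potential limit: Theorem 4.3 gives convergence of solutions when the coefficients vary, but here the coefficients are fixed and the data vary, so one needs the analogous (and easier) stability of $G^*$ with respect to weak-$*$ convergence of measures of uniformly bounded variation, which follows from the boundedness of $G:C^0(\bar\O)\to C^0(\bar\O)$ by duality. One must also be slightly careful that the exceptional null set on which $\int_\O g(\xxx,\yyy)\,d\m(\yyy)$ might a priori fail to be defined does not depend on $k$; this is handled by choosing the a.e.-convergent subsequence once and for all and invoking Tonelli on the product of the measure spaces $(\O,dx)$ and $(\O,\m)$, which is legitimate since $g\geq 0$ is jointly measurable.
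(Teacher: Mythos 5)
Your argument hinges on producing absolutely continuous measures $\m_k=\p_k\,dx$ with $\p_k\in C^0(\bar\O)$, $\p_k\geq 0$, such that $\int_\O g(\xxx,\yyy)\,d\m_k(\yyy)\nearrow\int_\O g(\xxx,\yyy)\,d\m(\yyy)$ for a.e.\ $\xxx$, and this is the step that fails. Monotone convergence of the integrals of a \emph{fixed} non-negative measurable integrand against a varying measure requires setwise monotonicity $\m_k(E)\nearrow\m(E)$; weak-$*$ convergence is not enough here, because $g(\xxx,\cdot)$ is neither continuous nor bounded (it is singular at $\yyy=\xxx$ and only defined up to null sets). But setwise monotone approximation of $\m$ by absolutely continuous measures is impossible precisely in the case the paper cares about: if $\m=\d_{\aaa}$ then $\m_k(\{\aaa\})=0$ for every $k$ while $\m(\{\aaa\})=1$, and $\int_\O g(\xxx,\yyy)\p_k(\yyy)\,dy$ need not converge to $g(\xxx,\aaa)$. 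The alternative you offer, $\m_k=\ttt_k\m$ with $\ttt_k$ a cutoff, does give monotonicity but is circular: such a $\m_k$ is again a general measure, and the representation $u_k(\xxx)=\int_\O g(\xxx,\yyy)\,d\m_k(\yyy)$ for it is exactly what is being proved (it is Theorem 4.5, which the paper deduces \emph{from} this lemma, not conversely). The same defect propagates to your Fatou step, which also presupposes the a.e.\ convergence of $\int_\O g\,d\m_k$ to $\int_\O g\,d\m$.

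The ingredient you do have right --- and which constitutes the paper's entire proof --- is the observation that $\int_\O g(\xxx,\yyy)\,dx=G(\mathbf 1)(\yyy)$ is a bounded continuous function (Theorems 3.2 and 4.2 with $\z\equiv 1$), so that $\int_\O\bigl[\int_\O g(\xxx,\yyy)\,dx\bigr]\,d\m(\yyy)<\infty$. No approximation of $\m$ is needed: since $g\geq 0$, Tonelli--Fubini applied to the product measure $dx\otimes d\m$ on $\O\times\O$ immediately yields that the other iterated integral is finite as well, i.e.\ that $\xxx\mapsto\int_\O g(\xxx,\yyy)\,d\m(\yyy)$ is finite for a.e.\ $\xxx$ and lies in $L^1(\O)$. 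If you discard the $\m_k$ entirely and argue this way directly, your proof becomes the paper's.
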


\begin{proof}
Let $\z(\xxx)\in C^0(\bar\O)$, $\z(\xxx)\geq 0$ and consider the solution $\f(\xxx)$ of the problem

\begin{equation*}
\f\in H^{1,2}_0(\O),\quad L\f=\z.
\end{equation*}
By Theorem 3.2 we have $\f(\xxx)\in C^0(\bar\O)$ and by Theorem 4.2

\begin{equation}
\label{2_49}
\f(\yyy)=\int_\O g(\xxx,\yyy)\z(\xxx)dx.
\end{equation}
Since $\f(\yyy)$ is continuous, the integral 

\begin{equation}
\label{3_49}
\int_\O \f(\yyy)\ d\m(\yyy)
\end{equation}
exists finite. Moreover, the iterated integral 

\begin{equation*}
\int_\O\biggl[\int g(\xxx,\yyy)\z(\xxx)\ dx\biggl]\ d\m(\yyy)
\end{equation*}
which is obtained substituting (\ref{2_49}) into (\ref{3_49}) is convergent. On the other hand, $g(\xxx,\yyy)\geq 0$ thus $g(\xxx,\yyy)\z(\xxx)\m(\yyy)$ is a non-negative measure of bounded variation. By Fubini's theorem, (see \cite{F} page 88) the integral

\begin{equation*}
\int_{\O\times\O} g(\xxx,\yyy)\z(\xxx)\ dx\ d\m(\yyy)
\end{equation*}
exists finite. With the choice $\z=1$ we have

\begin{equation*}
\int_{\O\times\O} g(\xxx,\yyy)\ dx\ d\m(\yyy)<\infty.
\end{equation*}
Again by Fubini's theorem the sectional function

\begin{equation*}
u(\xxx)=\int_\O g(\xxx,\yyy)\ d\m(\yyy)
\end{equation*}
is well-defined and belongs to $L^1(\O)$.
\end{proof}
\vskip .3cm

\begin{theorem}
Let $\m$ be a measure of bounded variation and $g(\xxx,\yyy)$ the Green's function of  the operator $L$. Then 

\begin{equation}
\label{1_52}
u(\xxx)=\int_\O g(\xxx,\yyy)\ d\m(\yyy)
\end{equation}
gives the weak solution, vanishing on $\pa\O$, of the equation

\begin{equation}
\label{2_52}
L u=\m.
\end{equation}
\end{theorem}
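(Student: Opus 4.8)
The plan is to reduce everything to the facts already proved and then verify the weak formulation (\ref{2_28}) by a Fubini argument. First I would invoke the Jordan decomposition $\m=\m^+-\m^-$ with $\m^\pm\in\mm$ non-negative. By Lemma 4.5 each function $u^\pm(\xxx)=\int_\O g(\xxx,\yyy)\,d\m^\pm(\yyy)$ is finite for almost every $\xxx$ and belongs to $L^1(\O)$, so $u=u^+-u^-$ defined by (\ref{1_52}) is a well-defined element of $L^1(\O)$, which is precisely the regularity demanded in Definition 3.2. By the uniqueness part of Theorem 3.3 it then suffices to check that this $u$ satisfies (\ref{2_28}).

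So fix $\f\in H^{1,2}_0(\O)\cap C^0(\bar\O)$ with $\p:=L\f\in C^0(\bar\O)$. By Theorem 4.2, formula (\ref{3_34}), we have $\f(\yyy)=\int_\O g(\xxx,\yyy)\p(\xxx)\,dx$. Substituting the definition of $u$ gives
\begin{equation*}
\int_\O u\,L\f\,dx=\int_\O\Bigl(\int_\O g(\xxx,\yyy)\,d\m(\yyy)\Bigr)\p(\xxx)\,dx,
\end{equation*}
and the objective is to interchange the order of integration so that the inner integral becomes $\f(\yyy)$ and the right hand side turns into $\int_\O\f\,d\m$.

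The only real obstacle is justifying this interchange, because $g(\xxx,\yyy)\geq 0$ but $\p$ and $\m$ may change sign. I would handle it by splitting: write $\p=\p^+-\p^-$ into positive and negative parts, both in $C^0(\bar\O)$ and hence bounded, and use $\m=\m^+-\m^-$ from above. For each of the four pairings, Lemma 4.5 applied with the constant function $1$ gives $\int_{\O\times\O}g(\xxx,\yyy)\,dx\,d\m^\pm(\yyy)<\infty$; since $0\leq g(\xxx,\yyy)\p^\pm(\xxx)\leq \pp\p\pp_{L^\infty(\O)}\,g(\xxx,\yyy)$, the function $g(\xxx,\yyy)\p^\pm(\xxx)$ is integrable with respect to the product measure $dx\times d\m^\pm(\yyy)$. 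Fubini's theorem then applies to each piece, and recombining by linearity yields
\begin{equation*}
\int_\O u\,L\f\,dx=\int_\O\Bigl(\int_\O g(\xxx,\yyy)\p(\xxx)\,dx\Bigr)d\m(\yyy)=\int_\O\f(\yyy)\,d\m(\yyy),
\end{equation*}
which is exactly (\ref{2_28}). Hence $u$ given by (\ref{1_52}) is a weak solution of $Lu=\m$ vanishing on $\pa\O$, and by Theorem 3.3 it is the unique one.
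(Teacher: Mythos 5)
Your proposal is correct and follows essentially the same route as the paper: reduce to non-negative measures via the Jordan decomposition, use the lemma guaranteeing that $\int_\O g(\xxx,\yyy)\,d\m(\yyy)$ defines an $L^1(\O)$ function, represent $\f(\yyy)=\int_\O g(\xxx,\yyy)\p(\xxx)\,dx$ via Theorem 4.2, and interchange the order of integration by Fubini to land on the weak formulation. The only difference is that you are somewhat more explicit than the paper in justifying the Fubini step (splitting $\p$ into positive and negative parts and dominating by $\pp\p\pp_{L^\infty(\O)}\,g$), which is a welcome refinement rather than a departure.
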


\begin{proof}
We can write $\m=\m^+-\m^-$, with $\m^+$ and $\m^-$ both positive measures. On the other hand, the equation (\ref{2_52}) is linear and we have $u=u^+-u^-$ whereas $u^-$ and $u^+$ are the solutions, vanishing on $\pa\O$, of the equations $Lu^-=\m^-$ and $Lu^+=\m^+$. Hence  it is not restrictive to assume in the proof $\m\geq 0$. By Lemma 4.4 the right hand side of (\ref{1_52}) as a function of $\xxx\in\O$ is a well-defined function of $L^1(\O)$. Assume $\p(\xxx)\in C^0(\bar\O)$ and let $\f(\xxx)$ be the solution of the problem

\begin{equation*}
\f\in H^{1,2}_0(\O),\quad L\f=\p.
\end{equation*}
By Theorem 4.2 we have

\begin{equation}
\label{2_53}
\f(\yyy)=\int_\O g(\xxx,\yyy)\p(\xxx)\ dx.
\end{equation}
If $u(\xxx)$ is given by (\ref{1_52}), we obtain

\begin{equation}
\label{3_53}
\int_\O u(\xxx)\p(\xxx)\ dx=\int_\O\biggl[\int_\O g(\xxx,\yyy)d\m(\yyy)\biggl]\p(\xxx)\ dx,
\end{equation}
where in the right hand side we have an iterated integral. Applying the Fubini's theorem we can invert the order of integration in (\ref{3_53}). We obtain, using (\ref{2_53}),

\begin{equation*}
\int_\O\biggl[\int_\O g(\xxx,\yyy)d\m(\yyy)\biggl]\p(\xxx)\ dx=\int_\O\biggl[\int_\O g(\xxx,\yyy)\p(\xxx)\ dx\biggl]d\m(\yyy)=\int_\O\f(\yyy)d\m(\yyy).
\end{equation*}
Hence, from (\ref{3_53}) we get

\begin{equation*}
\int_\O u(\xxx)\p(\xxx)\ dx=\int_\O \f(\yyy)\ d\m(\yyy).
\end{equation*}
Thus $u(\xxx)$ is the weak solution of equation (\ref{2_52}) vanishing on $\pa\O$.
\end{proof}

\section{Proof of the Volterra's reciprocity Theorem 2.1}
By Theorem 3.5 there exists a unique weak solution $\f^{(1)}(\xxx)$ of the equation

\begin{equation}
\label{1_55}
\bigl(a_{ij}(\xxx)\f^{(1)}_{x_i}\bigl)_{x_j}=\frac{I}{\a^{(\aaa)}}\d_{\aaa}(\xxx)-\frac{I}{\a^{(\bbb)}}\d_{\bbb}(\xxx)
\end{equation}
vanishing on $\pa\O$. The same can be said for the equation

\begin{equation}
\label{2_55}
\bigl(a_{ij}(\xxx)\f^{(2)}_{x_i}\bigl)_{x_j}=\frac{I}{\a^{(\ccc)}}\d_{\ccc}(\xxx)-\frac{I}{\a^{(\ddd)}}\d_{\ddd}(\xxx).
\end{equation}
Let $g(\xxx,\yyy)$ be the Green's function of the operator $L$. Let $\aaa$, $\bbb$, $\ccc$ and $\ddd$ be arbitrary, distinct points of $\O$ considered in this order. By Theorem 4.5 we have 

\begin{equation}
\label{1_57}
\f^{(2)}(\aaa)-\f^{(2)}(\bbb)-\bigl[\f^{(1)}(\ccc)-\f^{(1)}(\ddd)\bigl]=
\end{equation}

\begin{equation*}
=\frac{I}{\a^{(\ccc)}}\int_\O g(\aaa,\yyy)\ d\d_\ccc(\yyy)-\frac{I}{\a^{(\ddd)}}\int_\O g(\bbb,\yyy)\ d\d_\ddd(\yyy)-\frac{I}{\a^{(\aaa)}}\int_\O g(\ccc,\yyy)\ d\d_\aaa(\yyy)+\frac{I}{\a^{(\bbb)}}\int_\O g(\ddd,\yyy)\ d\d_\bbb(\yyy).
\end{equation*}
Recalling the basic property of the Dirac's measure and the symmetry (\ref{2_35}) of the Green's function we obtain

\begin{equation*}
\f^{(2)}(\aaa)-\f^{(2)}(\bbb)-\biggl[\f^{1}(\ccc)-\f^{(1)}(\ddd)\biggl]=I\biggl(\frac{1}{\a^{(\ccc)}}-\frac{1}{\a^{(\aaa)}}\biggl)g(\aaa,\ccc)-I\biggl(\frac{1}{\a^{(\ddd)}}-\frac{1}{\a^{(\bbb}}\biggl)g(\bbb,\ddd).
\end{equation*}
Hence, if 

\begin{equation}
\label{1_58}
\frac{1}{\a^{(\ccc)}}=\frac{1}{\a^{(\ddd)}}\ \hbox{and}\ \frac{1}{\a^{(\ddd)}}=\frac{1}{\a^{(\bbb)}}
\end{equation}
we have

\begin{equation}
\label{2_58}
\f^{(2)}(\aaa)-\f^{(2)}(\bbb)=\f^{(1)}(\ccc)-\f^{(1)}(\ddd).
\end{equation}
The condition (\ref{1_58}) is satisfied if, in particular, $a_{ij}(\xxx)=\s\d_{ij}$ with $\s>0$ i.e. for an homogeneous conductor.

\begin{remark}
We note that the reciprocity condition (\ref{2_58}) may hold also in other cases. Let us consider, for example, instead of (\ref{1_55}), (\ref{2_55}) the equations

\begin{equation}
\label{1_59}
\bigl(a_{ij}(\xxx)\f^{(1)}_{x_i}\bigl)_{x_j}=\d_{\aaa}(\xxx)-\d_{\bbb}(\xxx)
\end{equation}

\begin{equation}
\label{2_59}
\bigl(a_{ij}(\xxx)\f^{(2)}_{x_i}\bigl)_{x_j}=\d_{\ccc}(\xxx)-\d_{\ddd}(\xxx)
\end{equation}
still with zero boundary conditions. The reciprocity condition (\ref{2_58}) holds. However, this is not the case we have in mind since with the equations (\ref{1_59}), (\ref{2_59}) the current injected in $\aaa$ is, in general, different from the current extracted from $\bbb$.
\end{remark}

\bibliographystyle{amsplain}

\end{document}